\newcommand{\executeiffilenewer}[3]{%
\ifnum\pdfstrcmp{\pdffilemoddate{#1}}%
{\pdffilemoddate{#2}}>0%
{\immediate\write18{#3}}\fi%
}
\newcommand{%
\executeiffilenewer{.svg}{.pdf}%
{inkscape -z -D --file=.svg %
--export-pdf=.pdf --export-latex}%
\input{.pdf_tex}%
}[1]{%
\executeiffilenewer{#1.svg}{#1.pdf}%
{inkscape -z -D --file=#1.svg %
--export-pdf=#1.pdf --export-latex}%
\input{#1.pdf_tex}%
}
\theoremstyle{plain}
\newtheorem{proposition}{Proposition}%[chapter]
\newtheorem{definition}{Definition}
\newcounter{algocount}
\newcounter{examplecount}
\newenvironment{example}{\refstepcounter{examplecount}\begin{trivlist}\item \textbf{Example \theexamplecount.}}{\end{trivlist}}
\newcommand{\matp}{\boldsymbol{P}}
\newcommand{\capacity}{\ensuremath{\mathsf{C}}\xspace}
\newcommand{\bmm}{\begin{matrix}}
\newcommand{\emm}{\end{matrix}}
\newcommand{\bpm}{\begin{pmatrix}}
\newcommand{\epm}{\end{pmatrix}}
\newcommand{\bsbm}{\left[\begin{smallmatrix}}
\newcommand{\esbm}{\end{smallmatrix}\right]}
\newcommand{\bspm}{\left(\begin{smallmatrix}}
\newcommand{\espm}{\end{smallmatrix}\right)}
\newcommand{\bbm}{\begin{bmatrix}}
\newcommand{\ebm}{\end{bmatrix}}
\DeclareMathOperator*{\argmin}{argmin}
\DeclareMathOperator{\entop}{\mathbb{H}}
\DeclareMathOperator{\miop}{\mathbb{I}}
\DeclareMathOperator*{\st}{subject\;to}
\newcommand{\mset}[1]{\mathcal{#1}}
\newcommand{\rv}[1]{\mathsf{#1}}
\newcommand{\pmf}[1]{P_{\rv{#1}}}
\newcommand{\pmfn}[2]{P_{\rv{#1}^{#2}}}
\newcommand{\pmfd}[2]{P_{\rv{#1}_{#2}}}
\newcommand{\entrp}[1]{\mathbb{H}\left( #1 \right)}
\newcommand{\ber}[1]{Bernoulli$\left(#1\right)$ }
\newcommand{\beronehalf}{\ber{\frac{1}{2}}}
\newcommand{\diverg}[2]{\mathbb{D}\left( #1 \vert\vert #2 \right)}
\newcommand{\nTypePA}[2]{\mset{T}_{#2}^{#1}}
\newcommand{\nTypePAcard}{\vert \nTypePA{n}{\pmf{\bar{A}}} \vert}
\title{Constant Composition Distribution Matching}
\author{Patrick Schulte and Georg B\"ocherer,~\IEEEmembership{Member,~IEEE}
\thanks{This work was supported by the German Ministry of Education and Research in the framework of an Alexander von Humboldt Professorship.}
\thanks{The authors are with the Institute for Communications Engineering, Technische Universit\"at M\"unchen, M\"unchen, Germany.}
}
\DeclareMathOperator{\supp}{supp}
\newcommand{\power}{\mathsf{P}}
\pgfplotsset{
compat=newest,
width=\columnwidth,
height=0.3\textheight
}
\begin{document}
\maketitle
\begin{abstract}
Distribution matching transforms independent and \beronehalf distributed input bits into a sequence of output symbols with a desired distribution. %The transform is injective.
Fixed-to-fixed length,  invertible, and low complexity encoders and decoders based on constant composition and arithmetic coding are presented.
Asymptotically in the blocklength, the encoder achieves the maximum rate, namely the entropy of the desired distribution.
Furthermore, the normalized divergence of the encoder output and the desired distribution goes to zero in the blocklength.
\end{abstract}
\section{Introduction}
\IEEEPARstart{A}{}\emph{distribution matcher} transforms independent Bernoulli$(\frac{1}{2})$ distributed input bits into output symbols with a desired distribution.
We measure the distance between the matcher output distribution and the desired distribution by normalized \emph{informational divergence}\cite[p.~7]{csiszar2011information}.
Informational divergence is also known as \emph{Kullback-Leibler divergence} or \emph{relative entropy} \cite[Sec.~2.3]{cover2006elements}.
A dematcher performs the inverse operation and recovers the input bits from the output symbols.
A distribution matcher is a building block of the bootstrap scheme \cite{bocherer2011operating} that achieves the capacity of arbitrary discrete memoryless channels \cite{mondelli2014achieve}.
Distribution matchers are used in  \cite[Sec.~VI]{MacKay1999} for rate adaption and in \cite{bocherer2015bandwidth} to achieve the capacity of the additive white Gaussian noise channel.
%In literature we can find fixed-to-variable length \cite{bocherer2014informational,bocherer2013arithmetic,amjad2013fixed} and variable-to-fixed length \cite{bocherer2011matching} approaches.
%However, variable length approaches have to face problems like \emph{error propagation}, \emph{synchronization} and \emph{variable rate} \cite{amjad2013fixed}.

Prefix-free distribution matching was proposed in \cite[Sec.~IV.A]{forney1984efficient}. In \cite{kschischang1993optimal,Ungerboeck2002} Huffman codes are used for matching. Optimal variable-to-fixed and fixed-to-variable length distribution matchers are proposed in \cite{bocherer2011matching} and \cite{amjad2013fixed}, respectively.
The codebooks of the matchers in \cite{kschischang1993optimal,Ungerboeck2002,bocherer2011matching,amjad2013fixed} must be generated offline and stored. This is infeasible for large codeword lengths, which are necessary to achieve the maximum rate.
This problem is solved in \cite{Cai2007,bocherer2013arithmetic} by using arithmetic coding to calculate the codebook online. The matchers proposed in \cite{Cai2007,bocherer2013arithmetic} are asymptotically optimal. All approaches \cite{kschischang1993optimal,Ungerboeck2002,bocherer2011matching,amjad2013fixed,Cai2007, bocherer2013arithmetic} are variable length, which can lead to varying transmission rate, large buffer sizes, error propagation and synchronization problems \cite[Sec.~I]{kschischang1993optimal}.
\emph{Fixed-to-fixed} (f2f) length codes do not have these issues.
The author of \cite[Sec.~4.8]{amjad2013algorithms} suggests to concatenate short codes and the authors of \cite{mondelli2014achieve} employ a forward error correction decoder to build an f2f length matcher.
The dematchers of \cite{mondelli2014achieve,amjad2013algorithms} cannot always recover the input sequence with zero error.
Hence systematic errors are introduced that cannot be corrected by the error correction code or by retransmission.
The thesis  \cite{schulte2014Zero} proposes an invertible f2f length distribution matcher called \emph{adaptive arithmetic distribution matcher} (aadm). The algorithm is computationally complex.

In this work we propose practical, invertible, f2f length distribution matchers.
They are asymptotically optimal and are based on constant composition codes indexed by arithmetic coding.
The paper is organized as follows.
In Section \ref{sec:problem} we formally define distribution matching.
We analyze constant composition codes in Section \ref{sec:ConstanCompositionDistributionMatching}.
In Section \ref{sec:ArithmeticCoding} we show how a constant composition distribution matcher (ccdm) and dematcher can be implemented efficiently by arithmetic coding.
\begin{figure}
\centering
\usetikzlibrary{fit}
%\usetikzlibrary{backgrounds}
%\sf
\tikzset{%
 dmsblock/.style = {draw,thick,top color = gray!1,bottom color = gray!10,rectangle,text centered,minimum height = 25,minimum width = 25},
 dsblock/.style = {draw,thick,top color = gray!1,bottom color = gray!10,rectangle,text centered,text width = 30,minimum height = 30, dash pattern=on 1pt off 4pt on 6pt off 4pt}
}
\tikzset{dashdot/.style={draw=black!50!white, line width=1pt,
                               dash pattern=on 1pt off 4pt on 6pt off 4pt,
                                inner sep=1.2mm, rectangle, rounded corners}}

\begin{tikzpicture}[scale= 0.9,auto, thick, node distance=39] %, >=triangle 45
\node (xes) {\footnotesize$\rv{B}^{m}$};
	\node [dmsblock, right of = xes, node distance=50](matcher){\footnotesize Matcher};
	\node [right of = matcher, minimum height = 40,node distance=50] (out_dev){\footnotesize$\tilde{\rv{A}}^{n}$};
    \node [dmsblock, right of = out_dev, node distance=55] (dematcher){\footnotesize Dematcher};
	\node [right of = dematcher, minimum height = 40,node distance=50] (recover){\footnotesize$\rv{B}^{m}$};
	\node [dmsblock,below of= matcher](pa){\footnotesize$P_{\rv{A}}$};
	\node [below of=out_dev, minimum height = 40](AAs){\footnotesize$\rv{A}^{n}$};
	%\begin{pgfonlayer}{background}
	\node [dashdot,fit={(matcher) (xes)}](ptilde){};
	%\end{pgfonlayer}
	
	\node[above] at (ptilde.north) {\footnotesize$P_{\mathsf{\tilde{\rv{A}}}^n}$};
	\draw[-latex] (xes) -- (matcher);
	\draw[] (matcher) -- (out_dev);
	\draw[-latex] (out_dev) -- (dematcher);
	\draw[] (dematcher) -- (recover);
	\draw[](pa) -- (AAs);
	%\path [-latex, draw, dashed](out_dev) -- node {$\approx$}(out_sim);
\end{tikzpicture}
\caption{Matching a data block $\rv{B}^m = \rv{B}_1\!\ldots\!\rv{B}_m$ to output symbols $\tilde{\rv{A}}^n = \tilde{\rv{A}}_1\!\ldots\!\tilde{\rv{A}}_n$ and reconstructing the original sequence at the dematcher.
The rate is $\frac{m}{n}\left[ \frac{\text{bits}}{\text{output symbol}} \right]$. The matcher can be interpreted as emulating a discrete memoryless source $\pmf{A}$.}
\label{fig:memoryless_Source}
\end{figure}
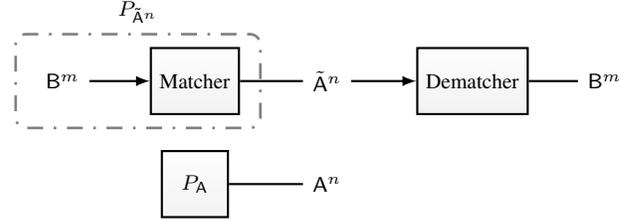
\section{Problem statement} \label{sec:problem}
The entropy of a discrete random variable $\rv{A}$ with alphabet $\mset{A}$ and distribution $\pmf{A}$ is
\begin{equation}
\entrp{\rv{A}} = \sum_{a \in \supp(\pmf{A})} -\pmf{A}(a) \log_2 {\pmf{A}(a)}
\end{equation}
where $\supp(\pmf{A}) \subseteq \mset{A}$ is the support of $\pmf{A}$.
The informational divergence of two distributions on $\mset{A}$ is
\begin{equation}
 \diverg{\pmf{\hat{A}}}{\pmf{A}} = \sum_{a \in \supp(\pmf{\hat{A}})} \pmf{\hat{A}}(a) \log_2 \frac{\pmf{\hat{A}}(a)}{\pmf{A}(a)}.
\end{equation}
The normalized informational divergence for length $n$ random vectors $\rv{\hat{A}}^n = \rv{\hat{A}}_1\!\ldots\!\rv{\hat{A}}_n$ and $\rv{A}^n$ is defined as
\begin{equation}
 \frac{\diverg{\pmfn{\hat{A}}{n}}{\pmfn{A}{n}}}{n}.
\end{equation}
For  random vectors with independent and identically distributed (iid) entries, we write
\begin{equation}
 \pmf{A}^n(a^n) = \prod_{i=1}^n \pmf{A}(a_i).
\end{equation}

A one-to-one f2f distribution matcher is an invertible function $f$. We denote the inverse function by $f^{-1}$. The mapping imitates a desired distribution $\pmf{A}$ by mapping $m$ \beronehalf distributed bits $\rv{B}^m$ to length $n$ strings $\tilde{\rv{A}}^n = f(B^m) \in \mset{A}^n$.
The output distribution is $\pmfn{\tilde{A}}{n}$.
The concept of one-to-one f2f distribution matching is illustrated in Fig. \ref{fig:memoryless_Source}.
\begin{definition} \label{def:achievableRate}
A matching rate $R = m/n$ is achievable for a distribution $\pmf{A}$ if for any $\alpha > 0$ and sufficiently large $n$ there is an invertible mapping $f\colon\lbrace0,1\rbrace^m \to \mset{A}^n$ for which
\begin{equation}
\frac{\diverg{P_{f(\rv{B}^m)}}{\pmf{A}^n}}{n}\leq \alpha. \label{eq:ratecond:diverg}
\end{equation} 
\end{definition}
The following proposition in \cite{bocherer2014informational} relates the rate $R$ and \eqref{eq:ratecond:diverg}.
\begin{proposition}[Converse,{\cite[Proposition~8]{bocherer2014informational}}] \label{prop:converse}
	 There exists a positive-valued function $\beta$ with
\begin{equation}
 \beta(\alpha) \overset{\alpha \rightarrow 0}{\longrightarrow} 0
\end{equation}
such that \eqref{eq:ratecond:diverg} implies
\begin{equation}
 \frac{m}{n} \leq \frac{\entrp{\rv{A}}}{\entrp{\rv{B}}} + \beta(\alpha).
\end{equation}
\end{proposition}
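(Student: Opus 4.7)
The plan is to combine three ingredients: invertibility of $f$ (which pins down the output entropy), a direct expansion that yields an exact identity for $m/n$ in terms of the averaged per-letter marginal of $\tilde{\rv{A}}^n$, and uniform continuity of entropy on the probability simplex over the finite alphabet $\mset{A}$.

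Since $\rv{B}^m$ is uniform on $\{0,1\}^m$ and $f\colon\{0,1\}^m\to\mset{A}^n$ is injective, $\tilde{\rv{A}}^n=f(\rv{B}^m)$ is uniform on the image of $f$, so $\entrp{\tilde{\rv{A}}^n}=m=m\cdot\entrp{\rv{B}}$. Introducing the averaged per-letter marginal $\bar P_A(a):=\tfrac{1}{n}\sum_{i=1}^n P_{\tilde{\rv{A}}_i}(a)$ and rewriting the cross-entropy contribution gives $-\expect{\log\pmf{A}^n(\tilde{\rv{A}}^n)}=n[\entrp{\bar P_A}+\diverg{\bar P_A}{\pmf{A}}]$. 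Substituting both into the definition of divergence yields the exact identity
\begin{equation}
\frac{m}{n}=\entrp{\bar P_A}+\diverg{\bar P_A}{\pmf{A}}-\frac{1}{n}\diverg{P_{f(\rv{B}^m)}}{\pmf{A}^n}.
\end{equation}

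Subadditivity of entropy together with convexity of $\diverg{\cdot}{\pmf{A}}$ in its first argument gives $n\diverg{\bar P_A}{\pmf{A}}\le\diverg{P_{f(\rv{B}^m)}}{\pmf{A}^n}$; applied to the identity above this yields both $\diverg{\bar P_A}{\pmf{A}}\le\alpha$ and $m/n\le\entrp{\bar P_A}$. Pinsker's inequality then converts $\diverg{\bar P_A}{\pmf{A}}\le\alpha$ into a total-variation bound of order $\sqrt{\alpha}$, and a standard continuity-of-entropy estimate on the finite alphabet $\mset{A}$ (with constants depending only on $|\mset{A}|$) gives $\entrp{\bar P_A}\le\entrp{\rv{A}}+\epsilon(\alpha)$ for some $\epsilon$ vanishing at $0$. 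Combining with $\entrp{\rv{B}}=1$ produces the claim with $\beta:=\epsilon$. The step I would watch most carefully is this last one: the continuity bound must be quantitative in total-variation distance with constants independent of $n$, which is precisely where finiteness of $\mset{A}$ enters; without it, $\beta(\alpha)$ would fail to be a universal vanishing function of $\alpha$.
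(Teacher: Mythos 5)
The paper does not prove this proposition itself: it is imported verbatim by citation from \cite{bocherer2014informational}, so there is no in-text argument to compare against step by step. Judged on its own, your proof is correct and complete, and it is essentially the standard argument for this converse. The exact identity $\frac{m}{n}=\entrp{\bar P_A}+\diverg{\bar P_A}{\pmf{A}}-\frac{1}{n}\diverg{P_{f(\rv{B}^m)}}{\pmf{A}^n}$ follows as you say from $\entrp{\tilde{\rv{A}}^n}=m$ (uniform input plus injectivity) and the cross-entropy decomposition; the inequality $n\diverg{\bar P_A}{\pmf{A}}\le\diverg{P_{f(\rv{B}^m)}}{\pmf{A}^n}$ is exactly superadditivity of divergence against a product reference measure (subadditivity of joint entropy combined with convexity of $\diverg{\cdot}{\pmf{A}}$), and it delivers both $m/n\le\entrp{\bar P_A}$ and $\diverg{\bar P_A}{\pmf{A}}\le\alpha$. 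The closing step via Pinsker and a quantitative entropy-continuity bound (Fannes-type, with constants depending only on $|\mset{A}|$) is the right one, and you correctly identify it as the place where finiteness of $\mset{A}$ is essential; for completeness you should also note that the continuity estimate is only valid for small total variation, so $\beta$ should be defined piecewise (e.g., capped at $\log_2|\mset{A}|$ for large $\alpha$, which is harmless since only the behavior as $\alpha\to 0$ matters), and that the hypothesis forces $\supp(\bar P_A)\subseteq\supp(\pmf{A})$ so all divergences are finite. With those two small remarks added, the proof stands.
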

Proposition \ref{prop:converse} bounds the maximum rate that can be achieved under condition \eqref{eq:ratecond:diverg}.
Since $\entrp{\rv{B}} = 1$ we have
\begin{equation}
 R \leq \entrp{\rv{A}} \label{eq:conv:result}
\end{equation}
for any achievable rate $R$.
% The image of $f$ is $\supp(\pmfn{\tilde{A}}{n})$ and has $2^m$ elements.
% We only address $2^{m}$ codewords which does not match with the request to create an independent and identical distributed (iid) process.
% There are sequences that are not in the support i.e. they can never occur although an iid source would possibly create them. However there exist matcher $f$ that achieve maximum rate according to definition \ref{def:achievableRate}. This contradiction is resolved in the following remark. 
% \begin{remark}
% \cite[Theorem 1.2]{wyner1975common} states that 
% an arbitrary distributed source outputs entropy typical sequences of an iid process for a large blocksize if distance (in terms of normalized divergence) between former and latter distribution is small. Thus an arbitrary distributed source imitates an iid source if normalized divergence is sufficient small.
% \end{remark}
\section{Constant Composition Distribution Matching}
\label{sec:ConstanCompositionDistributionMatching}
The empirical distribution of a vector $\boldsymbol{c}$ of length $n$ is defined as
\begin{equation}
 P_{\mathsf{\bar{A}},\boldsymbol{c}}(a) := \frac{n_a(\boldsymbol{c})}{n}
\end{equation}
where $n_a(\boldsymbol{c}) = \left| \left\lbrace i: c_i = a \right\rbrace \right|$ is the number of times symbol $a$ appears in $\boldsymbol{c}$.
The authors of \cite[Sec.~2.1]{csiszar2004information} call $P_{\mathsf{\bar{A}},\boldsymbol{c}}$ the \emph{type} of $\boldsymbol{c}$. An $n$-type is a type based on a length $n$ sequence.
A codebook $\mset{C}_{\!\text{ccdm}} \subseteq{\mset{A}}^{n}$ is called a \emph{constant composition code} if all codewords are of the same type, i.e., 
 $n_a(\boldsymbol{c})$ does not depend on the codeword $\boldsymbol{c}$.
 We will write $n_a$ in place of $n_a(\boldsymbol{c})$ for a constant composition code.
%\begin{definition}
%Let $n_a(c)$ be the number of symbols $a \in \mathcal{A}$ occurring in the codeword $c$.
%A codebook $\mset{C}_{\!\text{ccdm}} \subseteq{\mset{A}}^{n}$ is called a constant composition codebook if and only if $n_a(\boldsymbol{c}_j) = n_a(\boldsymbol{c}_k)\quad \forall{a \in \mset{A};\boldsymbol{c}_j, \boldsymbol{c}_k \in \mset{C}}$ 
%\end{definition}
%Since by definition $n_a(c) = n_a, \;\forall c \in \mset{C}_{\!\text{ccdm}}$ 

\subsection{Approach}
We use a constant composition code with $n_a \approx \pmf{A} n$.
As all $n_a$ need to be integers and add up to $n$, there are multiple possibilities to choose the $n_a$.
We use the allocation that solves
\begin{equation}
\begin{split}
 \pmf{\bar{A}} = &\argmin_{\pmf{\bar{A}'}} \: \diverg{\pmf{\bar{A}'}}{\pmf{A}}\\
 &\st\pmf{\bar{A}'} \text{ is }n\text{-type}.
\end{split}\label{eq:quantization}
\end{equation}
The solution of \eqref{eq:quantization} can be found efficiently by \cite[Algorithm~2]{bocherer2014optimal}.
Suppose the output length $n$ is fixed and that we can choose the input length $m$.
Let $\nTypePA{n}{\pmf{\bar{A}}}$ be the set of vectors of type $\pmf{\bar{A}}$, i.e., we have
\begin{equation}
 \nTypePA{n}{\pmf{\bar{A}}} =\lbrace\boldsymbol{v} \; \vert \; \boldsymbol{v} \in \mset{A}^n, \frac{n_a(\boldsymbol{v})}{n} = \pmf{\bar{A}}(a)\quad \forall a \in \mset{A}\rbrace.
\end{equation}
The matcher is invertible, so we need at least as many codewords as input blocks.
The input blocklength must thus not exceed
$ \log_2 \nTypePAcard $. We set the input length to $m = \lfloor\log_2 \nTypePAcard \rfloor$
and we define the encoding function
\begin{equation}
f_{\!\text{ccdm}}: \lbrace 0,1 \rbrace^m \to \nTypePA{n}{\pmf{\bar{A}}}.
\end{equation}
The actual mapping $f_{\!\text{ccdm}}$ can be implemented efficiently by arithmetic coding, as we will show in Section~\ref{sec:ArithmeticCoding}. The constant composition codebook is now given by the image of $f_{\!\text{ccdm}}$, i.e.,
\begin{equation}
\mset{C}_{\!\text{ccdm}} = f_{\!\text{ccdm}}(\lbrace 0,1 \rbrace^m).
\end{equation}
Since  $f_{\!\text{ccdm}}$ is invertible, the codebook size is $|\mset{C}_{\!\text{ccdm}}|=2^m$.

\subsection{Analysis} \label{sec:analysis}
We show that $f_{\!\text{ccdm}}$ asymptotically achieves all rates satisfying \eqref{eq:conv:result}.
We can bound $m$ by
\begin{equation}
m = \left\lfloor\log_2 \nTypePAcard \right\rfloor \geq \log_2 \nTypePAcard -1.
\label{eq:bound_kc}
\end{equation}
Recall that the matcher output distribution is $\pmfn{\tilde{A}}{n}$.
We have
\begin{align}
&\diverg{\pmfn{\tilde{A}}{n}}{\pmf{A}^{n}} = \sum_{a^{n} \in \mathcal{C}_{\!\text{ccdm}} \subseteq \nTypePA{n}{\bar{P}_{\rv{A}}}} 2^{-m} \log_2\frac{2^{-m}}{\pmf{A}^n(a^n)}\frac{\pmf{\bar{A}}(a^n)}{\pmf{\bar{A}}(a^n)}\nonumber \displaybreak[3]\\
&\quad= \diverg{\pmfn{\tilde{A}}{n}}{\pmf{\bar{A}}^{n}} + \sum_{a^{n} \in \mathcal{C}_{\!\text{ccdm}} \subseteq \nTypePA{n}{\bar{P}_{\rv{A}}}} 2^{-m} \log_2\frac{\pmf{\bar{A}}^n(a^n)}{\pmf{A}^n(a^n)} \nonumber \displaybreak[3]\\
&\quad= \diverg{\pmfn{\tilde{A}}{n}}{\pmf{\bar{A}}^{n}} + \vert \mathcal{C}_{\!\text{ccdm}} \vert 2^{-m} \sum_{a \in \mset{A}} n_a \log_2 \frac{\pmf{\bar{A}}(a)}{\pmf{A}(a)}\nonumber \\
&\quad= \underbrace{\diverg{\pmfn{\tilde{A}}{n}}{\pmf{\bar{A}}^{n}}}_{\text{Term 1}}+ n \underbrace{\diverg{\pmf{\bar{A}}}{\pmf{A}}}_{\text{Term 2}}.\label{eq:divergenceCCC}
\end{align}
For Term 1 we obtain
\begin{align}
\diverg{\pmfn{\tilde{A}}{n}}{\pmf{\bar{A}}^{n}} &= \sum_{a^{n} \in \mathcal{C}_{\!\text{ccdm}} \subseteq \nTypePA{n}{\bar{P}_{\rv{A}}}} 2^{-m}\log_2 \frac{2^{-m}}{\prod\limits_{i \in \mset{A}}\pmf{\bar{A}}(i)^{n_i}}\notag\\
&= \sum\limits_{\mathcal{C}_{\!\text{ccdm}}} 2^{-m} \log_2 \frac{2^{-m}}{2^{-n\mathbb{H}(\rv{\bar{A}})}}\notag\\
&= n\mathbb{H}(\rv{\bar{A}})-m.\label{eq:asymptoticDivergence}
\end{align}
%--------------- Mark1 -------------
Using \eqref{eq:asymptoticDivergence} in \eqref{eq:divergenceCCC} and dividing by $n$ we have
\begin{equation}
 \frac{\diverg{\pmfn{\tilde{A}}{n}}{\pmf{A}^{n}}}{n} = \mathbb{H}(\rv{\bar{A}}) - R +
\diverg{\pmf{\bar{A}}}{\pmf{A}}. \label{eq:normdivergExplicit}
\end{equation}
The choice \eqref{eq:quantization} of $\pmf{\bar{A}}$ guarantees (see \cite[~Proposition 4]{bocherer2014optimal}) that for the third term in \eqref{eq:normdivergExplicit} we have
\begin{equation}
 \diverg{\pmf{\bar{A}}}{\pmf{A}} < \frac{k}{\displaystyle \min_{a \in \supp\pmf{A}}\pmf{A}(a) n^2}
 \label{eq:upperboundOnChoiceOfPABar}
\end{equation}
where $k = |\mset{A}|$ is the alphabet size. 
Consequently, we know that this term vanishes as the blocklength approaches infinity, i.e., we have
\begin{equation}
 \lim_{n\rightarrow \infty} \diverg{\pmf{\bar{A}}}{\pmf{A}} = 0. \label{eq:probbilityidentity}
\end{equation}
We now relate the input and output lengths to understand the asymptotic behavior of the rate. 
By \cite[~Lemma 2.2]{csiszar2004information}, we have
\begin{equation}
\nTypePAcard \geq \binom{n + k - 1}{k - 1}^{-1} 2^{n\mathbb{H}(\rv{\bar{A}})} \geq (n+k)^{-k} 2^{n\mathbb{H}(\rv{\bar{A}})}.
\end{equation}
Taking the logarithm to the base $2$ and dividing by $n$ we have
\begin{equation}
\frac{\log_2{\nTypePAcard}}{n} \geq \frac{-k \log_2(n+k)}{n}+ \entrp{\rv{\bar{A}}}. \label{eq:bound_log}
\end{equation}
For the rate, we obtain
\begin{align}
R = \frac{m}{n} &\overset{\eqref{eq:bound_kc}}{\geq} \frac{\log_2 \nTypePAcard}{n} - \frac{1}{n} \notag\\
&\overset{\eqref{eq:bound_log}}{\geq} \frac{-k \log_2(n+k)}{n}+ \entrp{\rv{\bar{A}}} - \frac{1}{n} \label{eq:lowerBoundOnRate}
\end{align}
and in the asymptotic case
\begin{equation}
\lim_{n \rightarrow \infty} R = \entrp{\rv{\bar{A}}}. \label{eq:asymptoticRate}
\end{equation}
 From \eqref{eq:probbilityidentity} and \cite[Proposition~6]{bocherer2014informational} we know that $\entrp{\rv{\bar{A}}} \rightarrow \entrp{\rv{A}}$ and by  \eqref{eq:probbilityidentity} and \eqref{eq:asymptoticRate} in \eqref{eq:normdivergExplicit}, normalized divergence approaches zero for $n \rightarrow \infty$.
\begin{figure}
\centering
% This file was created by matlab2tikz v0.4.7 running on MATLAB 8.3.
% Copyright (c) 2008--2014, Nico Schlömer <nico.schloemer@gmail.com>
% All rights reserved.
% Minimal pgfplots version: 1.3
% 
% The latest updates can be retrieved from
%   http://www.mathworks.com/matlabcentral/fileexchange/22022-matlab2tikz
% where you can also make suggestions and rate matlab2tikz.
% 
\begin{tikzpicture}[font=\footnotesize]

\begin{axis}[%
%axis background/.style={
%fill= blue!5 },
width=0.36\textwidth,
height=65mm,
scale only axis,
xmode=log,
xmin=10,
xmax=10000,
xminorticks=true,
xmajorgrids,
xminorgrids,
xlabel={Output blocklength $n$ (in symbols)},
separate axis lines,
every outer y axis line/.append style={blue},
every y tick label/.append style={font=\color{blue}\footnotesize},
ymax=1,
ymode=log,
ylabel={Normalized divergence (in bits per symbol)},
ymajorgrids,
%legend pos=south east,
legend style={font=\footnotesize,at={(0.98,0.98)},anchor=north east, legend cell align=left}
]
\addlegendimage{empty legend}
\addlegendentry{\hspace{-0.5cm}\textbf{Normalized divergence}}
\addplot [color=blue,thick, mark=x, mark repeat={4}, mark size = 2]
  table[row sep=crcr]{%
10	0.562126661727604\\
12	0.453382131629775\\
13	0.432315402813148\\
15	0.362627946590146\\
18	0.32526150119509\\
20	0.333701299189644\\
23	0.314842037123781\\
27	0.254814991792944\\
31	0.241122814776856\\
36	0.218374085844941\\
41	0.197465216472161\\
47	0.181067985474743\\
54	0.161410631702398\\
63	0.138753521727418\\
72	0.132928904631127\\
83	0.118457253686564\\
95	0.103286820918822\\
110	0.0931969744194568\\
126	0.0816432955981793\\
146	0.0724207118137824\\
168	0.0656126989642668\\
193	0.0574280789747415\\
222	0.0528788627834519\\
256	0.0461917621521581\\
295	0.0405077728021532\\
339	0.0355300898858071\\
391	0.0330986077261811\\
450	0.0290849444668692\\
518	0.0267730022617277\\
596	0.0225935577896297\\
687	0.0199173303496859\\
791	0.0181652321707025\\
910	0.0160941358132567\\
1048	0.0146691836020913\\
1207	0.0127348302270843\\
1389	0.0112698186118978\\
1600	0.00984009835184821\\
1842	0.00877160200036727\\
2121	0.00757741433384309\\
2442	0.00665916840780431\\
2812	0.00602178166025024\\
3237	0.00538185939785032\\
3728	0.00466992737768425\\
4292	0.00414829300749415\\
4942	0.00364836980728526\\
5690	0.00324842695904912\\
6551	0.00283984280494188\\
7543	0.00257743889115127\\
8685	0.00222171632168395\\
10000	0.00201427300066644\\
};
\addlegendentry{ccdm};
%  \addplot [color=blue!50!red,dashed,thick, mark=x, mark repeat={4}, mark size = 2]
%    table[row sep=crcr]{%
%  10	1.10045596516963\\
%  12	0.835559558753238\\
%  13	0.758277511002697\\
%  15	0.602378542796345\\
%  18	0.49288097963069\\
%  20	0.463647401534641\\
%  23	0.416518587010688\\
%  27	0.329074109112337\\
%  31	0.297914376728434\\
%  36	0.259384022148977\\
%  41	0.230247656609159\\
%  47	0.205266435770682\\
%  54	0.180270157220556\\
%  63	0.151977865125921\\
%  72	0.143513224443261\\
%  83	0.126310315452285\\
%  95	0.109304464232335\\
%  110	0.0977171089110225\\
%  126	0.0850791727737206\\
%  146	0.0748994875617346\\
%  168	0.0675603382447092\\
%  193	0.0589113663835433\\
%  222	0.0539932190522081\\
%  256	0.0469881683540907\\
%  295	0.0411279426190291\\
%  339	0.0359897112310629\\
%  391	0.0334496760846662\\
%  450	0.0293391025799367\\
%  518	0.0269708292021892\\
%  596	0.0227468184991304\\
%  687	0.0200287601628366\\
%  791	0.0182531507582293\\
%  910	0.016158465249106\\
%  1048	0.0147178243192503\\
%  1207	0.012772143313138\\
%  1389	0.0112978237252457\\
%  1600	0.00986034881454481\\
%  1842	0.00878775279912444\\
%  2121	0.00758952123696031\\
%  2442	0.00666823374165523\\
%  2812	0.00602868992198164\\
%  3237	0.00538698489143514\\
%  3728	0.00467383050272411\\
%  4292	0.00415127868512986\\
%  4942	0.0036505923208325\\
%  5690	0.00325012443718363\\
%  6551	0.0028411065975192\\
%  7543	0.00257836016455367\\
%  8685	0.00222243189888831\\
%  10000	0.00201482701728742\\
%  };
%\addlegendentry{Upper Bound on ${\diverg{\pmfn{\tilde{A}}{n}}{\pmf{A}^n}}/{n}$ \eqref{normdivergUpperBoundExplicit}};
\addplot [color=blue!40,ultra thick,solid]
table[row sep=crcr]{%
14	0.107798993530132\\
21	0.0883559070013077\\
28	0.0755378711843144\\
35	0.066692316179804\\
42	0.0599780246406431\\
49	0.0546781111145991\\
56	0.0503595610610835\\
63	0.0467659921881641\\
70	0.0437277614995217\\
77	0.0411110358454136\\
84	0.0388297118821924\\
91	0.0368201330727669\\
};
\addlegendentry[align = left]{ optimal f2f, $R = \mathbb{H}(A)$  \cite[Sec.~4.4]{amjad2013algorithms}};

\addplot [color=red!50!yellow,solid,thick]
  table[row sep=crcr]{%
10	nan\\
32	0.495066622394646\\
100	0.103716690701949\\
316	0.0306774258134604\\
1000	0.0104680836591021\\
3162	0.00316181744686671\\
5623	0.00206549076999741\\
10000	0.00108091104851264\\
};
\addlegendentry[align = left]{aadm, $R = \mathbb{H}(A)$ \cite{schulte2014Zero}};

\end{axis}

\begin{axis}[%
width=0.36\textwidth,
height=60mm,
scale only axis,
xmin=10,
xmax=10000,
xmode=log,
every outer y axis line/.append style={black!50!green},
every y tick label/.append style={font=\color{black!50!green}\footnotesize},
ymin=1,
ymax=2,
ytick={  1, 1.5,   2},
y label style={at={(axis description cs:1.08,.5)},rotate=180,anchor=south},
ylabel={Rate (in bits per symbol)},
legend style={font=\footnotesize,at={(0.02,0.02)},anchor=south west, legend cell align=left},
axis x line*=bottom,
axis y line*=right
]
\addlegendimage{empty legend}
\addlegendentry{\hspace{-0.5cm}\textbf{Rate}}

\addplot [color=black!50!green,thick, mark=o, mark repeat={4}, mark size = 2]
  table[row sep=crcr]{%
10	1.3\\
12	1.33333333333333\\
13	1.30769230769231\\
15	1.33333333333333\\
18	1.38888888888889\\
20	1.35\\
23	1.47826086956522\\
27	1.48148148148148\\
31	1.48387096774194\\
36	1.55555555555556\\
41	1.5609756097561\\
47	1.5531914893617\\
54	1.59259259259259\\
63	1.61904761904762\\
72	1.61111111111111\\
83	1.63855421686747\\
95	1.65263157894737\\
110	1.65454545454545\\
126	1.66666666666667\\
146	1.68493150684932\\
168	1.68452380952381\\
193	1.69430051813472\\
222	1.6981981981982\\
256	1.70703125\\
295	1.70847457627119\\
339	1.71091445427729\\
391	1.71611253196931\\
450	1.72222222222222\\
518	1.72200772200772\\
596	1.72818791946309\\
687	1.73216885007278\\
791	1.73198482932996\\
910	1.73406593406593\\
1048	1.73568702290076\\
1207	1.73736536868268\\
1389	1.73866090712743\\
1600	1.74125\\
1842	1.74158523344191\\
2121	1.74257425742574\\
2442	1.74324324324324\\
2812	1.74395448079659\\
3237	1.74482545566883\\
3728	1.74543991416309\\
4292	1.74603914259087\\
4942	1.74645892351275\\
5690	1.7469244288225\\
6551	1.74736681422684\\
7543	1.7477131114941\\
8685	1.74795624640184\\
10000	1.7481\\
};
\addlegendentry{ccdm};

\addplot [color=green!30!black,thick]
  table[row sep=crcr]{10	1.750114273000667\\
  10000	1.750114273000667\\
  };
 \addlegendentry{$\mathbb{H}(A)$};

\addplot [color=green!70!yellow,dashed, ultra thick,]
  table[row sep=crcr]{10	0.127172304177625\\
12	0.333447606334\\
13	0.415510321846716\\
15	0.550666936082377\\
18	0.70357391330349\\
20	0.783121772856436\\
23	0.879699055233107\\
27	0.979122227017426\\
31	1.05601323855616\\
36	1.13101115134652\\
41	1.18993348311484\\
47	1.24607807368394\\
54	1.29767123632455\\
63	1.34909273709859\\
72	1.38911830003158\\
83	1.42756277922482\\
95	1.46045715215521\\
110	1.49255463612195\\
126	1.51924545353945\\
146	1.5452151307953\\
168	1.56734606455538\\
193	1.58696293995494\\
222	1.60470564304724\\
256	1.6208585259221\\
295	1.63521255538442\\
339	1.64778902502223\\
391	1.65931418253453\\
450	1.66943384200253\\
518	1.67847021125109\\
596	1.68649804017855\\
687	1.69373848318763\\
791	1.7001278707242\\
910	1.70577998572569\\
1048	1.71084358983543\\
1207	1.71534382345736\\
1389	1.71931807561377\\
1600	1.72287062693453\\
1842	1.72600963274519\\
2121	1.72879744673775\\
2442	1.73126708432167\\
2812	1.73345789800972\\
3237	1.73539419477352\\
3728	1.73711455919074\\
4292	1.73863358495035\\
4942	1.73997906739614\\
5690	1.74116859512528\\
6551	1.7422202842569\\
7543	1.74315063118236\\
8685	1.74397266128654\\
10000	1.74469895726379\\
};

 \addlegendentry{Lower bound \eqref{eq:lowerBoundOnRate} on ccdm rate};

\end{axis}
\end{tikzpicture}%
\caption{Normalized divergence and rate of ccdm over output blocklength for $\pmf{A}=(0.0722,0.1654,0.3209,0.4415)$. For comparison, the performance of optimal f2f \cite[Sec.~4.4]{amjad2013algorithms} and aadm \cite{schulte2014Zero} is displayed. Because of limited computational resources, we could calculate the performance of optimal f2f only up to a blocklength of $n = 90$.}
\label{fig:normalizedDivergenceAndRateVsBlocklength}
\end{figure}
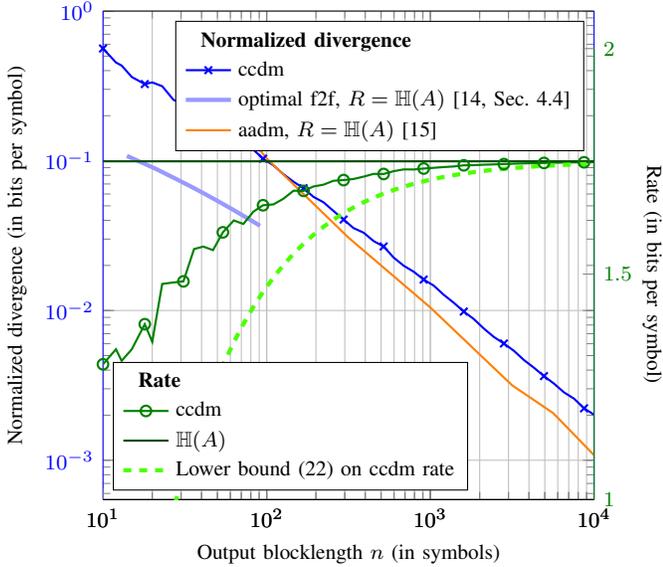
\begin{example} The desired distribution is
	\begin{equation*}
	\pmf{A}=(0.0722,0.1654,0.3209,0.4415).
	\end{equation*}
Fig.~\ref{fig:normalizedDivergenceAndRateVsBlocklength} shows the  normalized divergences and rates of ccdm and the optimal  f2f length matcher \cite[Sec.~4.4]{amjad2013algorithms}. The empirical performance of aadm \cite{schulte2014Zero} is also displayed. For optimal f2f and aadm, the rate is fixed to $\mathbb{H}(\rv{A})$ bits per symbol. Observe that the ccdm needs about 4   times the blocklength of the optimal scheme to reach an informational divergence of $0.06$ bits per symbol.
However, the memory for storing the optimal codebook grows exponentially in $m$.
For $n = 10$, we already need about $10240$ bits = $1.25$ kB; for $n = 100$ we would need $1.441\times10^{19}$ TB of memory.
In this example, ccdm performs better than aadm for short blocklength up to $100$ symbols.
Fig.~\ref{fig:normalizedDivergenceAndRateVsBlocklength}  also shows the lower and upper bounds  \eqref{eq:conv:result} and \eqref{eq:lowerBoundOnRate}, respectively.
\end{example}
\section{Arithmetic Coding}
\label{sec:ArithmeticCoding}
\begin{figure}
	\centering
	 
\usetikzlibrary{decorations.pathreplacing}
\pgfmathsetmacro{\len}{3.}
\begin{tikzpicture}[font=\footnotesize, scale=1.3]
\foreach \x/\y/\col in {1/0011/blue, 2/0101/black, 3/0110/blue,4/1001/blue,5/1010/black,6/1100/blue}{
\pgfmathsetmacro{\a}{\x *100/6};
\pgfmathsetmacro{\b}{\x *\len/6};
\draw[|-|,color=blue,thick] (1,\b-\len/6)-- node[right,color=\col](d\x){\y} (1,\b) node[left,black]{ $\x/6$};
}

\foreach \x/\y in {1/00, 2/01, 3/10, 4/11}{
\pgfmathsetmacro{\a}{\x *100/4};
\pgfmathsetmacro{\b}{\x *\len/4};
\draw[|-|,color=red,thick] (-1,\b-\len/4)-- node[left,color=black](h\x){\y} (-1,\b)node[right,black]{$\x/4$};
\draw[dotted] (-1,\b) --++(2,0);
}

\path[draw,-latex](h1.east) --(d1.west);
% \path[draw,-latex](h1.east) --(d2.west);
\path[draw,-latex](h2.east) --(d3.west);
\path[draw,-latex](h3.east) --(d4.west);
% \path[draw,-latex](h3.east) --(d5.west);
\path[draw,-latex](h4.east) --(d6.west);

\node at (-1.9,3.5/2) {$\lbrace0,1 \rbrace^m$};
\node at (2.1,3.5/2) {$\nTypePA{n}{\bar{P}_{\rv{A}}}$};  % Please uncomment to run independent
%\node at (3.0,3.5/2) {$\lbrace0,1 \rbrace^m$};
\node at (2.1,3.5/2 -0.6) {\textcolor{blue}{$\mset{C}_{\!\text{ccadm}}$}};   % Please uncomment to run independent111

\end{tikzpicture}
	\caption{Diagram of a constant composition arithmetic encoder with $\pmf{\bar{A}}(0) = \pmf{\bar{A}}(1) = 0.5$, $m=2$ and $n=4$.}
	\label{fig:ccadm_overview}
\end{figure}
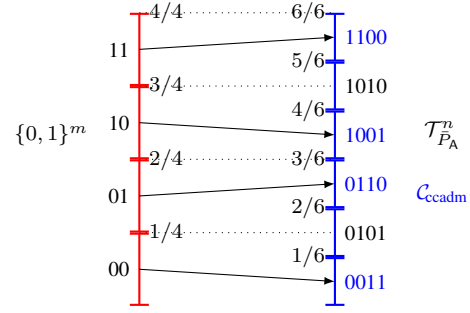

We use arithmetic coding for indexing sequences efficiently.
% In data compression arithmetic coding is widely used because the data model for the input stream can be easily adapted \cite[Sec.~3.1]{Sayir1999coding}.
% This way it is possible to react on changes of the data distribution.
% In case of distribution matching the statistics of the input are assumed to be \beronehalf.
% We are interested in controlling the output sequence.
% For this reason we connect a statistical model to the output that is fed by the output symbols as shown in Figure \ref{fig:block_enc_dec>parameter_estimator}.
% \begin{figure}[ht]
% \centering
% \input{diagrams/block_enc_dec_parameter_estimator}
% \caption{Block diagram on arithmetic encoding with parameter estimation on the output symbols}
% \label{fig:block_enc_dec>parameter_estimator}
% \end{figure}
Our arithmetic encoder associates an interval to each input sequence in $\{0,1\}^m$ and it associates an interval to each output sequence in $\nTypePA{n}{\pmf{\bar{A}}}$, see Fig.~\ref{fig:ccadm_overview} for an example. The size of an interval is equal to the probability of the corresponding sequence according to the input and output model, respectively.
For the input model we choose an iid \beronehalf process. 
We describe the output model by a random vector
\begin{equation}
\rv{\bar{A}}^n = \rv{\bar{A}}_1\rv{\bar{A}}_2\ldots\rv{\bar{A}}_n
\end{equation}
with marginals $\pmfd{\bar{A}}{i}=\pmf{\bar{A}}$ and the uniform distribution
\begin{equation*}
\pmfn{\bar{A}}{n} (a^n)= \frac{1}{\nTypePAcard}\quad\forall a^n\in \nTypePA{n}{\pmf{\bar{A}}}.
\end{equation*}
The intervals are ordered lexicographically.
All input and output intervals range from 0 to 1 because all probabilities add up to 1.
\begin{example}\label{ex:arithmBasic}
Fig.~\ref{fig:ccadm_overview} shows input and output intervals with output length $n=4$ and $\pmf{\bar{A}}(0) = \pmf{\bar{A}}(1) = 0.5$.
There are 4 equally probable input sequences and 6 equally probable output sequences.
The intervals on the input side are $[0,0.25)$, $[0.25, 0.5)$, $[0.5,0.75)$ and $[0.75,1)$. The intervals on the output side are $[0,\frac{1}{6})$, $[\frac{1}{6},\frac{2}{6})$, $[\frac{2}{6},\frac{3}{6})$, $[\frac{3}{6},\frac{4}{6})$, $[\frac{4}{6},\frac{5}{6})$ and $[\frac{5}{6},1)$.\footnote{Please note that in this case no distribution matcher is needed. However, the invertible mapping is of interest in its own right.}
\end{example}

The arithmetic encoder can link an output sequence to an input sequence if the lower border of the output interval is inside the input interval.
In the example (Fig.~\ref{fig:ccadm_overview}) '00' may link to both '0101' and '0011', while for '01' only a link to '0110' is possible.
There are at most two possible choices because by \eqref{eq:bound_kc} the input interval size is less than twice the output interval size.
Both choices are valid and we can perform an inverse operation.
In our implementation, the encoder decides for the output sequence with the lowest interval border.
As a result, the codebook $\mset{C}_{\!\text{ccdm}}$ of Example \ref{ex:arithmBasic} is $\lbrace$'$0011$', '$0110$, '$1001$', '$1100$'$\rbrace$. In general $\mset{C}_{\!\text{ccdm}}$ has cardinality $2^m$ with $2^m \leq \nTypePAcard < 2^{m+1}$ according to \eqref{eq:bound_kc}.
It is not possible to index the whole set $\nTypePA{n}{\pmf{\bar{A}}}$ unless $2^m = \nTypePAcard $.
The analysis of the code (Section \ref{sec:analysis}) is valid for all codebooks $\mset{C}_{\text{\!ccdm}} \subseteq \nTypePA{n}{\pmf{\bar{A}}}$.
The actual subset is implicitly defined by the arithmetic encoder.

We now discuss the online algorithm that processes the input sequentially.
Initially, the input interval spans from 0 to 1.
As the input model is \beronehalf we split the interval into two equally sized intervals and continue with the upper interval in case the first input bit is '1'; otherwise we continue with the lower interval.
After the next input bit arrives we repeat the last step.
After $m$ input bits we reach a size $2^{-m}$ interval.
After every refinement of the input interval the algorithm checks for a sure prefix of the output sequence, e.g., in Fig.~\ref{fig:ccadm_overview} we see that if the input starts with 1 the output must start with 1.
Every time we extend the sure prefix by a new symbol, we must calculate the probability of the next symbol given the sure prefix.
That means we determine the output intervals within the sure interval of the prefix.                                                                                                                                                                                                                                                                                                                                           
The model for calculating the conditioned probabilities is based on drawing without replacement.
There is a bag with $n$ symbols of $k$ discriminable kinds.
$n_a$ denotes how many symbols of kind $a$ are initially in the bag and $n'_a$ is the current number.
The probability to draw a symbol of type $a$ is $n'_a/n$.
If we pick a symbol $a$ both $n$ and $n'_a$ decrement by $1$.
%\begin{equation}
%\pmfn{\bar{A}}{n}(a^n) = \left\lbrace
%\begin{array}{l l}
%\frac{1}{\nTypePAcard} & \quad \text{if $a^n \in \nTypePA{n}{\pmf{\bar{A}}}$}\\
%0 & \quad \text{else}
%\end{array} \right.
%\end{equation}

\begin{figure}
\centering
\pgfmathsetmacro{\len}{3}

\begin{tikzpicture}[font=\footnotesize]

\foreach \x/\y/\z/\w in {0/0.16667/$00(11)$/$\frac{1}{6}$,0.16667/0.3333/$010(1)$/$\frac{2}{6}$, 0.3333/0.5/$011(0)$/$\frac{3}{6}$, 0.5/0.6666/$100(1)$/$\frac{4}{6}$ , 0.6666/0.8333/$101(0)$/$\frac{5}{6}$ , 0.8333/1/$11(00)$/$\frac{6}{6}$}{
\pgfmathsetmacro{\a}{\x *\len};
\pgfmathsetmacro{\b}{\y *\len};
\draw[|-|,thick] (5,\a)-- node[right,color=black](d\x){\z} (5,\b) node[black,left] {\w};
}

\foreach \x/\y/\z/\w in {0/0.16667/$00(11)$/$\frac{1}{6}$,0.16667/0.5/$01$/$\frac{3}{6}$ , 0.5/0.8333/$10$/$\frac{5}{6}$ , 0.8333/1/$11(00)$/$\frac{6}{6}$}{
\pgfmathsetmacro{\a}{\x *\len};
\pgfmathsetmacro{\b}{\y *\len};
\draw[|-|,thick] (3,\a)-- node[right,color=black](d\x){\z} (3,\b) node[black,left] {\w};
}

\foreach \x/\y/\z in {0/0.5/$0$, 0.5/1/$1$}{
\pgfmathsetmacro{\a}{\x *\len};
\pgfmathsetmacro{\b}{\y *\len};
\draw[|-|,thick] (1,\a)-- node[right,color=black](d\x){\z} (1,\b) node[black,left] {\y};
}

\end{tikzpicture}
\caption{Refinement of the output intervals. Round brackets indicate symbols that must follow with probability one.}
\label{fig:ccadm_refine}
\end{figure}
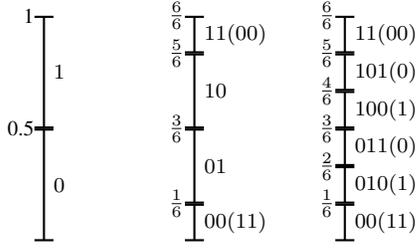
\begin{example}\label{ex:arithmRefine}
Fig. \ref{fig:ccadm_refine} shows a refinement of the output intervals.
Initially there are 2 '0's and 2 '1's in the bag.\
The distribution of the first drawn symbol is
$\pmfd{\bar{A}}{1}(0) =\pmfd{\bar{A}}{1}(1) = \frac{1}{2}$.
When drawing a '0', there are 3 symbols remaining: one '0' and two '1's.
Thus, the probability for a '0' reduces to 1/3 while the probability of '1' is 2/3.
If two '0's were picked, two '1's must follow.
This way we ensure that the encoder output is of the desired type.
Observe that the probabilities of the next symbol conditioned on the previous symbols are unequal in general, i.e, we have
\begin{equation}
P_{\rv{\bar{A}}_2|\rv{\bar{A}}_1}(0|0) \neq P_{\rv{\bar{A}}_2|\rv{\bar{A}}_1}(0|1)
\end{equation} in general.
However, $\pmfn{\bar{A}}{n} = \prod_{i =1}^{n} P_{\rv{\bar{A}}_1|\rv{\bar{A}}^{i-1}}(a_i|a^{i-1})$ is constant on $\nTypePA{n}{\pmf{\bar{A}}}$ as we show in the following proposition.

\end{example}
\begin{proposition}
 After n refinements of the output interval the model used for the refinement step stated above creates equally spaced (equally probable) intervals that are labeled with all sequences in $\nTypePA{n}{\pmf{\bar{A}}}$.
\end{proposition}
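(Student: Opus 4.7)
The plan is to compute, for every leaf of the refinement tree, the product of conditional probabilities produced by the draw-without-replacement model, and show that this product depends only on the composition of the drawn sequence and in fact equals $1/\nTypePAcard$ whenever the sequence has type $\pmf{\bar{A}}$, and equals $0$ otherwise. Since the refinements partition $[0,1)$ and the intervals are ordered lexicographically, this immediately yields the claimed equally spaced labelling.

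First I would fix any sequence $a^n \in \mset{A}^n$ and write out the probability the model assigns to the leaf labelled $a^n$ as
\begin{equation}
\pmfn{\bar{A}}{n}(a^n) \;=\; \prod_{j=1}^{n} \frac{n'_{a_j}(a^{j-1})}{n-j+1},
\end{equation}
where $n'_a(a^{j-1}) = n_a - |\{i<j : a_i = a\}|$ is the stock of symbol $a$ remaining in the bag after the first $j-1$ draws. Next I would observe that, grouping the factors in the numerator by symbol, each $a \in \mset{A}$ contributes exactly the factors $n_a, n_a-1, \ldots, n_a - n_a(a^n) + 1$, where $n_a(a^n)$ is the number of occurrences of $a$ in $a^n$; the denominator telescopes to $n!/(n - n)! = n!$. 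Hence
\begin{equation}
\pmfn{\bar{A}}{n}(a^n) \;=\; \frac{\prod_{a\in\mset{A}} n_a\,(n_a-1)\cdots(n_a - n_a(a^n) + 1)}{n!}.
\end{equation}

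From this expression two conclusions follow. If $a^n \notin \nTypePA{n}{\pmf{\bar{A}}}$, then some symbol $a$ is drawn more than $n_a$ times, so one of the factors in the numerator is $0$; consequently, the corresponding interval has width $0$ and does not appear. If $a^n \in \nTypePA{n}{\pmf{\bar{A}}}$, then $n_a(a^n) = n_a$ for every $a$, the numerator reduces to $\prod_a n_a!$, and $\pmfn{\bar{A}}{n}(a^n) = \prod_a n_a!/n!$, which is a constant independent of $a^n$. Recognizing this constant as the reciprocal of the multinomial coefficient $\binom{n}{n_{a_1},\ldots,n_{a_k}} = \nTypePAcard$ finishes the argument.

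I do not expect a real obstacle: the only thing to be careful about is bookkeeping the denominators and making sure the grouping of factors in the numerator by symbol is allowed (which it is, because multiplication is commutative, irrespective of the order in which the symbols are actually drawn). The proof could equally well be phrased as an induction on $n$, with the inductive step being the observation that conditioning on the first draw shifts the problem to a smaller bag and invokes the induction hypothesis, but the direct telescoping computation above is more transparent.
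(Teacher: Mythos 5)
Your proposal is correct and follows essentially the same route as the paper's proof: both compute the leaf probability as the telescoping product of the draw-without-replacement fractions, observe that the denominators multiply to $n!$ and the numerators group by symbol into $\prod_a n_a!$, and identify the result as $1/\nTypePAcard$. Your write-up is merely more explicit, in particular in noting that sequences outside $\nTypePA{n}{\pmf{\bar{A}}}$ receive probability zero.
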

\begin{proof}
 All symbols in the bag are chosen at some point.
 Consequently only sequences in $\nTypePA{n}{\pmf{\bar{A}}}$ may appear.
 All possibilities associated with the chosen string are products of fractions $n'_a/n$,
 where $n$ takes on all values from the initial value to $1$ because every symbol is drawn at some point.
 Thus for each string we obtain for its probability an expression that is independent of the realization itself:
 \begin{equation}
 \pmfn{\bar{A}}{n}(a^n) = \frac{n_{a=0}! \dotsb n_{a=k-1}!}{n!} = \frac{1}{\nTypePAcard} \quad \forall a^n \in \nTypePA{n}{\pmf{\bar{A}}}.
 \end{equation}
\end{proof}
Numerical problems for representing the input interval and the output interval occur after a certain number of input bits. For this reason we introduce a \emph{rescaling} each time a new output symbol is known. We explain this next.

\subsection{Scaling input and output intervals}
After we identify a prefix, we are no longer interested in code sequences that do not have that prefix. We scale the input and output interval such that the output interval is [0,1).
Fig.~\ref{fig:ccadm_scaling} illustrates the mapping of intervals  (in$_1$, out$_1$) to (in$_2$, out$_2$).
The refinement for the second symbol works as described in Example \ref{ex:arithmRefine}. If the second input bit is $0$, we know that $10$ must be a prefix of the output. The resulting scaling is shown in Fig.~\ref{fig:ccadm_scaling} as (in$_2$, out$_2$) to (in$_3$, out$_3$).
A more detailed explanation of scaling for arithmetic coding can be found for instance in \cite[Chap.~4]{sayood2006introduction}. We provide an implementation of ccdm online~\cite{website:ccdm}.

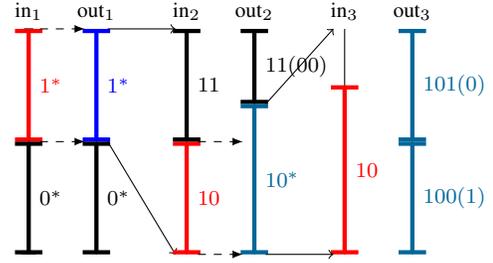
\begin{figure}
\centering
\pgfmathsetmacro{\len}{3.0}

% \usetikzlibrary{decorations.pathreplacing}
\begin{tikzpicture}[scale={0.3}, font=\footnotesize]
\node [above] at (0,10){in$_1$};
\draw[|-|] (0,0) --(0,10);
\draw (-0.2,10) -- (0.2,10);
\draw[|-|,ultra thick,red](0,5)--(0,10) node[midway,right]{$1^*$};
\draw[|-|,ultra thick](0,0)--(0,5) node[midway,right]{$0^*$};
\node [above] at (3,10){out$_1$};
\draw (3,0) -- (3,10);

\draw[|-|,ultra thick,blue](3,5)--(3,10) node[midway,right]{$1^*$};

\draw[|-|,ultra thick](3,5)--(3,0)node[midway,right]{$0^*$};

\draw[dashed,-latex](0.5,10) --(2.5,10);
\draw[dashed,-latex](0.5,5) --(2.5,5);

\draw[->](3.5,5) -- (6.5,0);
\draw[->](3.5,10) -- (6.5,10);

\node [above] at (7,10){in$_2$};
\draw (7,0) --(7,10);
\draw[|-|,ultra thick,red](7,0)--(7,5) node[midway,right]{$10$};
\draw[|-|,ultra thick](7,5)--(7,10) node[midway,right]{$11$};
\node [above] at (10,10){out$_2$};
\draw (10,0) --(10,10);
\draw[|-|,ultra thick](10,{20/3})--(10,10)node[midway,right]{$11(00)$};
\draw[|-|,ultra thick,blue!60!green](10,{20/3})--(10,0)node[midway,right]{$10^*$};

\draw[dashed,-latex](7.5,5) --(9.5,5);
\draw[dashed,-latex](7.5,0) --(9.5,0);

\node [above] at (14,10){in$_3$};
\draw (14,0) --(14,10);
\draw[|-|,ultra thick,red](14,0)--(14,7.5) node[midway,right]{$10$};
\node [above] at (17,10){out$_3$};
\draw (17,0) --(17,10);
\draw[|-|,ultra thick,blue!60!green](17,5)--(17,10)node[midway,right]{$101(0)$};
\draw[|-|,ultra thick,blue!60!green](17,5)--(17,0)node[midway,right]{$100(1)$};

\draw[->](10.5,0) -- (13.5,0);
\draw[->](10.5,{20/3}) -- (13.5,10);

\end{tikzpicture}
\caption{Scaling of input and output intervals in case the input interval is a subset of an output interval. The latter interval corresponds to $[0,1)$ after scaling. A star indicates that this is just a prefix of the complete word. Round brackets indicate symbols that must follow with probability one.}
\label{fig:ccadm_scaling}
\end{figure}
\section{Conclusion}
 We presented a practical and invertible f2f length distribution matcher that achieves the maximum rate asymptotically in the blocklength.
 In contrast to matchers proposed in the literature \cite{kschischang1993optimal,Ungerboeck2002,bocherer2011matching,amjad2013fixed,Cai2007, bocherer2013arithmetic}  the f2f matcher is robust to synchronization and variable rate problems. Error propagation is limited by the blocklength.
In future work we plan to investigate f2f length codes that perform well in the finite blocklength regime.
 
\section{Acknowledgment}
 We wish to thank Irina Bocharova and Boris Kudryashov for encouraging us to work on the presented approach.

\bibliographystyle{IEEEtran}
\bibliography{IEEEabrv,confs-jrnls,references}

\end{document}